\newcommand{\Kn}[1]{K#1}
\newcommand{\dimension}[1]{\text{dim}(#1)}
\newcommand{\removed}[1]{}
\title{Packing Plane Spanning Trees into a Point Set
}
\author{Ahmad Biniaz\thanks{University of Waterloo, Canada. {\tt ahmad.biniaz@gmail.com}}
	\and Alfredo Garc\'{\i}a\thanks{Universidad de Zaragoza, Spain. {\tt olaverri@unizar.es}}}
\date{}
\newtheorem{lemma}{Lemma}
\newtheorem{corollary}{Corollary}
\newtheorem{theorem}{Theorem}
\newtheorem{observation}{Observation}
\newtheorem*{problem*}{Problem}
\newtheorem*{invariant*}{Invariant}
\begin{document}
	\maketitle
	\begin{abstract}
		Let $P$ be a set of $n$ points in the plane in general position. We show that at least $\lfloor n/3\rfloor$ plane spanning trees can be packed into the complete geometric graph on $P$. This improves the previous best known lower bound $\Omega\left(\sqrt{n}\right)$. Towards our proof of this lower bound we show that the center of a set of points, in the $d$-dimensional space in general position, is of dimension either $0$ or $d$.
	\end{abstract}

	\section{Introduction}
	 In the two-dimensional space, a {\em geometric graph} $G$ is a graph whose vertices are points in the plane and whose edges are straight-line segments connecting the points. A subgraph $S$ of $G$ is {\em plane} if no pair of its edges cross each other. Two subgraphs $S_1$ and $S_2$ of $G$ are {\em edge-disjoint} if they do not share any edge. 
	 
	 Let $P$ be a set of $n$ points in the plane.
	 The {\em complete geometric graph} $\Kn{(P)}$ is the geometric graph with vertex set $P$ that has a straight-line edge between every pair of points in $P$. 
	We say that a sequence $S_1,S_2,S_3,\dots$ of subgraphs of $\Kn{(P)}$ is {\em packed into} $\Kn{(P)}$, if the subgraphs in this sequence are pairwise edge-disjoint. In a packing problem, we ask for the largest number of subgraphs of a given type that can be packed into $\Kn{(P)}$. Among all subgraphs, plane spanning trees, plane Hamiltonian paths, and plane perfect matchings are of interest. Since $\Kn{(P)}$ has ${n(n-1)}/{2}$ edges, at most $\lfloor n/2\rfloor$ spanning trees, at most $\lfloor n/2\rfloor$ Hamiltonian paths, and at most $n-1$ perfect matchings can be packed into it. 
	
	A long-standing open question is to determine whether or not it is possible to pack $\lfloor n/2\rfloor$ plane spanning trees into $\Kn{(P)}$. If $P$ is in convex position, the answer in the affirmative follows from the result of Bernhart and Kanien~\cite{Bernhart1979}, and a characterization of such plane spanning trees is given by Bose et al.~\cite{Bose2006}. In CCCG 2014, Aichholzer et al.~\cite{Aichholzer2017} showed that if $P$ is in general position (no three points on a line), then $\Omega(\sqrt{n})$ plane spanning trees can be packed into $\Kn{(P)}$; this bound is obtained by a clever combination of crossing family (a set of pairwise crossing edges)~\cite{Aronov1994} and double-stars (trees with only two interior nodes)~\cite{Bose2006}. Schnider~\cite{Schnider2016} showed that it is not always possible to pack $\lfloor n/2\rfloor$ plane spanning double stars into $\Kn{(P)}$, and gave a necessary and sufficient condition for the existence of such a packing. 
	As for packing other spanning structures into $\Kn{(P)}$, Aichholzer et al.~\cite{Aichholzer2017} and Biniaz et al.~\cite{Biniaz2015} showed a packing of 2 plane Hamiltonian cycles and a packing of $\lceil\log_2 n\rceil -2$ plane perfect matchings, respectively.
	
	The problem of packing spanning trees into (abstract) graphs is studied by Nash-Williams~\cite{Nash-Williams1961} and Tutte~\cite{Tutte1961} who independently obtained necessary and sufficient conditions to pack $k$ spanning trees into a graph. Kundu~\cite{Kundu1974} showed that at least $\lceil (k-1)/2\rceil$ spanning trees can be packed into any $k$-edge-connected graph.
	
	In this paper we show how to pack $\lfloor n/3\rfloor$ plane spanning trees into $\Kn{(P)}$ when $P$ is in general position. This improves the previous $\Omega(\sqrt{n})$ lower bound.
	
	\section{Packing Plane Spanning Trees}
	\label{proof-section}
	In this section we show how to pack $\lfloor {n}/{3}\rfloor$ plane spanning tree into $\Kn{(P)}$, where $P$ is a set of $n\geqslant 3$ points in the plane in general position (no three points on a line). 
	If $n\in\{3,4,5\}$ then one can easily find a plane spanning tree on $P$. Thus, we may assume that $n\geqslant 6$.
	
	The {\em center} of $P$ is a subset $C$ of the plane such that any closed
	halfplane intersecting $C$ contains at least $\lceil {n}/{3}\rceil$ points of $P$. A {\em centerpoint} of $P$ is a member of $C$, which does not necessarily belong to $P$. Thus, any halfplane that contains a centerpoint, has at least $\lceil {n}/{3}\rceil$ points of $P$. 
	It is well known that every point set in the plane has a centerpoint; see e.g. \cite[Chapter 4]{Edelsbrunner1987}. 
	We use the following corollary and lemma in our proof of the $\lfloor {n}/{3}\rfloor$ lower bound; the corollary follows from Theorem~\ref{alternate-dimension-thr} that we will prove later in Section~\ref{center-section}.
	
	\begin{corollary}
		\label{dimension-cor}
		Let $P$ be a set of $n\geqslant 6$ points in the plane in general position, and let $C$ be the center of $P$. Then, $C$ is either $2$-dimensional or $0$-dimensional. If $C$ is $0$-dimensional, then it consists of one point that belongs to $P$, moreover $n$ is of the form $3k+1$ for some integer $k\geqslant 2$.
	\end{corollary}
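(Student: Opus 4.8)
The plan is to derive the corollary from the general-dimension result, Theorem~\ref{alternate-dimension-thr}, which asserts that for a point set in general position in $\mathbb{R}^d$, the center is either $0$-dimensional or $d$-dimensional. Specializing to $d=2$ immediately gives that $C$ is either $0$-dimensional or $2$-dimensional, which is the first sentence of the corollary. The remaining work is to analyze the degenerate case where $C$ is $0$-dimensional, i.e., a single point, and to extract the two additional claims: that this point lies in $P$, and that $n$ has the form $3k+1$ with $k \geqslant 2$.

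First I would recall the defining property of the center: every closed halfplane meeting $C$ contains at least $\lceil n/3 \rceil$ points of $P$. When $C = \{c\}$ is a single point, I would examine the halfplanes whose bounding line passes through $c$. The key idea is that $0$-dimensionality is a very rigid condition: if $c$ were not forced tightly by the point set, one could perturb it slightly in some direction while keeping the center property, contradicting that $C$ has empty interior. Concretely, I would argue that for $c$ to be an isolated centerpoint, the bound $\lceil n/3 \rceil$ must be met with equality by the halfplanes supporting $c$ from several independent directions; otherwise there is slack allowing a small disk around $c$ to consist entirely of centerpoints, making $C$ $2$-dimensional.

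Next I would show $c \in P$. Suppose not. Then every line through $c$ avoids all points of $P$ (since $c \notin P$ and general position allows at most finitely many special lines, but a line through $c$ alone carries no point of $P$ on it unless $c \in P$). I would consider rotating a directed line about $c$ and tracking the number of points strictly on each side. The tightness forced by $0$-dimensionality, combined with the fact that as the line rotates the point counts change only when the line sweeps past a point of $P$, lets me count the points modularly. Because no point of $P$ lies on the rotating line (as $c \notin P$), the two open halfplanes partition all $n$ points; matching this against the equality $\lceil n/3 \rceil$ from multiple directions forces a divisibility relation that I expect to yield a contradiction unless $c$ coincides with a point of $P$. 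This is the step I anticipate being the main obstacle: making precise the rotation-and-counting argument that pins $c$ to an actual point and simultaneously produces the arithmetic constraint.

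Finally, assuming $c \in P$, I would derive the form $n = 3k+1$. With $c$ a point of $P$, a line through $c$ has $c$ on it and splits the other $n-1$ points into two open halfplanes. The center equality $\lceil n/3 \rceil$ applied to the closed halfplanes on both sides, together with the isolation of $c$, should force each side to carry exactly a prescribed number of points so that the total $n-1$ is divisible by $3$, giving $n \equiv 1 \pmod 3$, i.e., $n = 3k+1$. The lower bound $k \geqslant 2$ then follows from the standing assumption $n \geqslant 6$, since $n = 3k+1 \geqslant 6$ forces $k \geqslant 2$ (as $k=1$ gives $n=4 < 6$). I would close by remarking that the $0$-dimensional case is genuinely exceptional and only the exact residue $n \equiv 1 \pmod 3$ survives the rigidity of an isolated centerpoint.
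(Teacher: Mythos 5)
Your plan correctly gets the first sentence of the corollary from Theorem~\ref{alternate-dimension-thr}, but you under-read that theorem, and the replacement arguments you build for the rest do not close the proof. The theorem does not merely assert the $0$-or-$d$ dimension dichotomy: its ``latter case'' clause states explicitly that when $C$ fails to be $d$-dimensional, if $P$ is in general position and $n\geqslant d+3$, then $C$ consists of a single point \emph{belonging to $P$} and $n=k(d+1)+1$ for some integer $k\geqslant 2$. Specializing to $d=2$ (and noting $n\geqslant 6\geqslant 5=d+3$) therefore yields \emph{every} claim of the corollary at once, including membership of the point in $P$ and the form $n=3k+1$ with $k\geqslant 2$. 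That one-line specialization is the paper's entire proof; the additional work you undertake is unnecessary.

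The additional work would be harmless if it were correct, but it has genuine gaps. The step you yourself flag as ``the main obstacle'' is exactly the missing piece: you never give an argument showing $c\in P$, and the assertion you start it from --- that if $c\notin P$ then ``every line through $c$ avoids all points of $P$'' --- is false, since the line through $c$ and any point $p\in P$ contains $p$; at best a generic line through $c$ avoids $P$, and your rotation count would have to handle precisely the moments when the rotating line sweeps through points of $P$. Similarly, the final claim that the centerpoint inequalities ``should force each side to carry exactly a prescribed number of points,'' giving $3\mid(n-1)$, is asserted rather than proved; bounds of the type in Lemma~\ref{prop1} do not by themselves yield any divisibility. The mechanism that actually pins $c$ to $P$ and produces $n\equiv 1\pmod{3}$ in the paper is not a local rotation or perturbation argument but a global counting argument: using Gr\"{u}nbaum's theorem (Theorem~\ref{Grunbaum-thr}), one finds $s+1$ closed halfplanes from the defining family $\mathcal{H}$ whose intersection $I$ is a $(d-s)$-dimensional polyhedron containing $C$; the complementary open halfplanes each contain at most $\alpha-1$ points and together cover $P\setminus I$, so $I$ contains at least $n-(s+1)(\alpha-1)$ points of $P$, while general position caps the number of points of $P$ in a $(d-s)$-dimensional flat at $d-s+1$; the inequality $n-(s+1)(\alpha-1)\leqslant d-s+1$ then forces, by a short case analysis on $n \bmod (d+1)$, that $s=d$, that $n=k(d+1)+1$, and that $I$ (hence $C$) is a single point of $P$. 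If you want a self-contained proof of the corollary rather than a citation, you would need to reproduce this counting argument with $d=2$; the rotation picture alone will not get you there.
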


\begin{lemma}
	\label{prop1}
	Let $P$ be a set of $n$ points in the plane in general position, and let $c$ be a centerpoint of $P$. Then, for every point $p\in P$, each of the two closed halfplanes, that are determined by the line through $c$ and $p$, contains at least $\lceil n/3\rceil+1$ points of $P$.
\end{lemma}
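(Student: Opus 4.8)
The plan is to reduce the claimed bound $\lceil n/3\rceil+1$ to the plain centerpoint bound $\lceil n/3\rceil$ applied to a \emph{slightly rotated} line through $c$, and then to recover the extra point from the fact that $p$ itself lies on the original line. Fix $p\in P$ and let $\ell$ be the line through $c$ and $p$ (we may assume $c\ne p$, so that $\ell$ is defined). By symmetry it suffices to bound the number of points of $P$ in one of the two closed halfplanes, say the ``upper'' halfplane $H^+$ bounded by $\ell$. The centerpoint property already gives $|P\cap H^+|\ge\lceil n/3\rceil$, since $c\in\ell\subseteq H^+$; the entire difficulty is to gain a single extra point.

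To do this I would rotate $\ell$ about the fixed point $c$ by a small angle $\varepsilon$, in the direction that pushes $p$ strictly into the open lower halfplane, obtaining a new line $\ell'$ through $c$. I would choose $\varepsilon>0$ small and generic so that (i) no point of $P$ lies on $\ell'$, and (ii) every point of $P$ strictly above $\ell$ stays strictly above $\ell'$ and every point strictly below $\ell$ stays strictly below $\ell'$. Both are achievable: the finitely many directions from $c$ to the points of $P$ are avoided for generic small $\varepsilon$, and a point at positive distance from $\ell$ does not change side under a sufficiently small rotation. Writing $O'$ for the open upper halfplane of $\ell'$ and applying the centerpoint property to the closed upper halfplane of $\ell'$ (which contains $c$ on its boundary), property (i) gives $|P\cap O'|\ge\lceil n/3\rceil$.

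The key comparison is then $P\cap O'\subseteq P\cap H^+$: by (ii) a point of $P$ strictly above $\ell'$ cannot have come from strictly below $\ell$, so it lies either strictly above $\ell$ or on $\ell$, and in either case it belongs to the closed halfplane $H^+$. Finally, $p$ lies in $P\cap H^+$ (it is on $\ell$) but not in $O'$ (we rotated it strictly below $\ell'$), so $|P\cap H^+|\ge|P\cap O'|+1\ge\lceil n/3\rceil+1$, and the lower halfplane is handled by rotating in the opposite direction. I expect the main obstacle to be one of care rather than depth: one must check that a single rotation direction simultaneously moves $p$ to the correct side while keeping every other point on its original side. In particular, general position allows at most one further point of $P$ on $\ell$ besides $p$, and one must ensure such a point does not spoil the inclusion $P\cap O'\subseteq P\cap H^+$; phrasing the argument through this inclusion, rather than through an explicit count of the points lying on $\ell$, is precisely what makes that concern disappear automatically.
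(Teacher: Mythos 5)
Your proposal is, at its core, the same argument as the paper's: rotate the line slightly about $c$ so that $p$ falls strictly to one side, apply the centerpoint property to the rotated halfplane, and observe that no point of $P$ can cross to the other side under a small rotation. The paper packages this as a contradiction (assume the closed halfplane has exactly $\lceil n/3\rceil$ points, rotate so that the closed halfplane still contains $c$ but loses $p$, and conclude it has fewer than $\lceil n/3\rceil$ points), while you run it directly via the inclusion $P\cap O'\subseteq P\cap H^+$. That difference is cosmetic, and your inclusion does correctly absorb a possible second point of $P$ lying on $\ell$ --- provided that point is not $c$ itself.

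That is the one genuine gap: your condition (i), that no point of $P$ lies on $\ell'$, is unsatisfiable when $c\in P$, because $c$ lies on every line through $c$, and no choice of $\varepsilon$ can avoid it. The lemma's hypothesis does not exclude $c\in P$, and this case is not vacuous: by Corollary~\ref{dimension-cor} (via Theorem~\ref{alternate-dimension-thr}), a $0$-dimensional center consists of a single point of $P$, so centerpoints belonging to $P$ genuinely occur (already for a triangle with one interior point). In that case your step $|P\cap O'|\geqslant\lceil n/3\rceil$ fails, since the closed upper halfplane of $\ell'$ may meet $P$ on its boundary, namely at $c$. The paper's contradiction version is immune to this, because it never needs the rotated line to miss $P$; it only needs that the rotated closed halfplane still contains $c$ (automatic, since $c$ is the center of rotation) and no longer contains $p$. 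Your argument admits an easy patch: by general position, if $c\in P$ then $c$ and $p$ are the only points of $P$ on $\ell$, so after a generic small rotation $c$ is the only point of $P$ on $\ell'$; hence $|P\cap O'|\geqslant\lceil n/3\rceil-1$, and since $c$ and $p$ are distinct points of $H^+\setminus O'$, you still get $|P\cap H^+|\geqslant\bigl(\lceil n/3\rceil-1\bigr)+2=\lceil n/3\rceil+1$. You should add this case (or note explicitly that you assume $c\notin P$ and handle $c\in P$ separately) for the proof to cover the statement as written.
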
	
\begin{wrapfigure}{r}{1.1in} 
	\centering
	\vspace{-12pt} 
	\includegraphics[width=.95in]{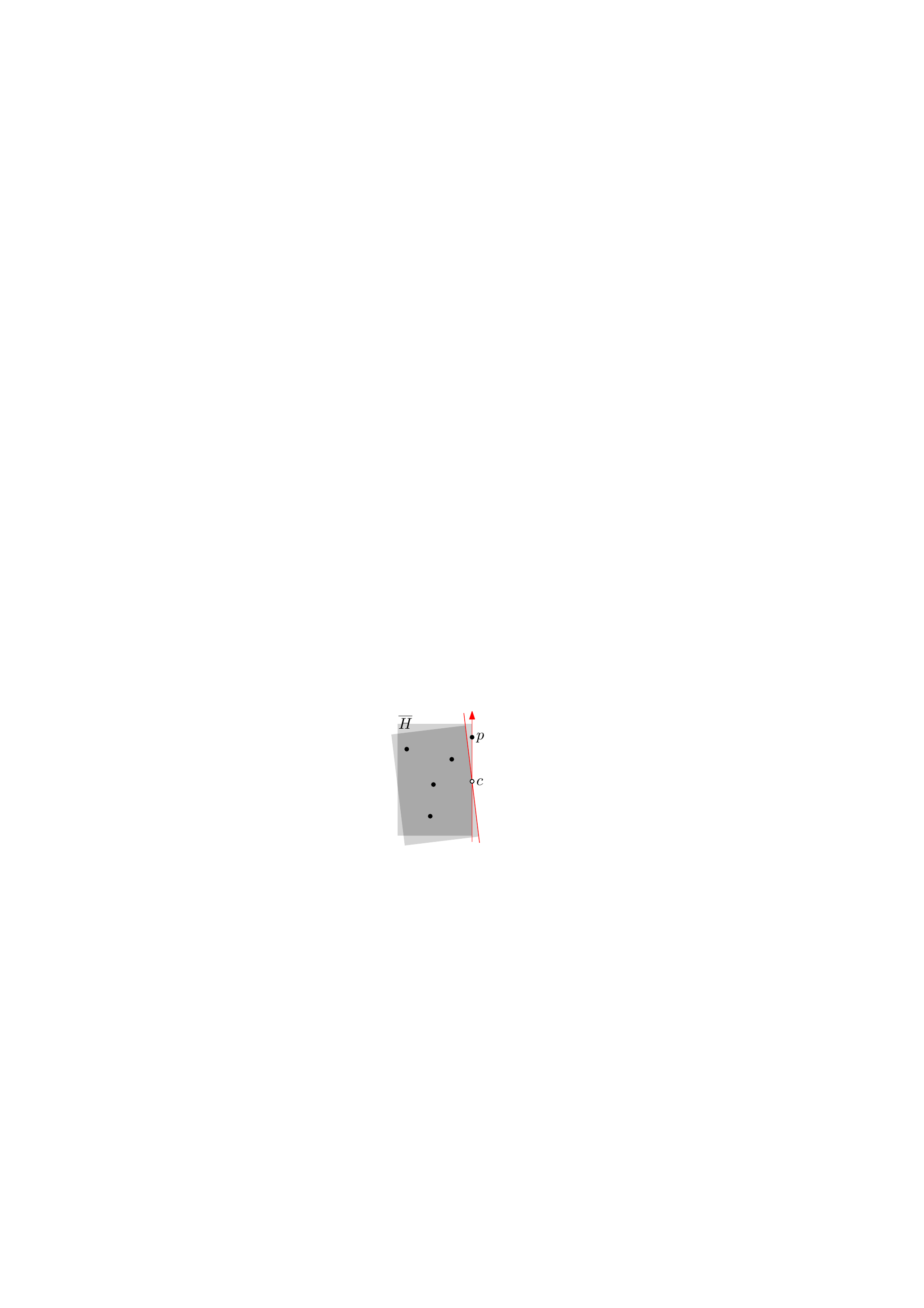} 
	\vspace{-5pt} 
\end{wrapfigure}
\noindent{{\em Proof.}}
For the sake of contradiction assume that a closed halfplane $\overline{H}$, that is determined by the line through $c$ and $p$, contains less than $\lceil n/3\rceil+1$ points of $P$. By symmetry assume that $\overline{H}$ is to the left side of this line oriented from $c$ to $p$; see the figure to the right. Since $c$ is a centerpoint and $\overline{H}$ contains $c$, the definition of centerpoint implies that $\overline{H}$ contains exactly $\lceil n/3\rceil$ points of $P$ (including $p$ and any other point of $P$ that may lie on the boundary of $\overline{H}$). By slightly rotating $\overline{H}$ counterclockwise around $c$, while keeping $c$ on the boundary of $\overline{H}$, we obtain a new closed halfplane that contains $c$ but misses $p$. This new halfplane contains less than $\lceil n/3\rceil$ points of $P$; this contradicts $c$ being a centerpoint of $P$. 
\qed \vspace{8pt}

	Now we proceed with our proof of the lower bound. We distinguish between two cases depending on whether the center $C$ of $P$ is $2$-dimensional or $0$-dimensional. First suppose that $C$ is $2$-dimensional. Then, $C$ contains a centerpoint, say $c$, that does not belong to $P$.
	Let $p_1,\dots,p_n$ be a counter-clockwise radial ordering of points in $P$ around $c$. For two points $p$ and $q$ in the plane, we denote by $\overrightarrow{pq}$, the ray emanating from $p$ that passes through $q$.

	Since every integer $n\geqslant 3$ has one of the forms $3k$, $3k+1$, and $3k+2$, for some $k\geqslant 1$, we will consider three cases. In each case, we show how to construct $k$ plane spanning directed graphs $G_1,\dots, G_k$ that are edge-disjoint. Then, for every $i\in\{1,\dots,k\}$, we obtain a plane spanning tree $T_i$ from $G_i$. First assume that $n=3k$. To build $G_i$, connect $p_i$ by outgoing edges to $p_{i+1},p_{i+2},\dots,p_{i+k}$, then connect $p_{i+k}$ by outgoing edges to $p_{i+k+1},p_{i+k+2},\dots,\allowbreak p_{i+2k}$, and then connect $p_{i+2k}$ by outgoing edges to $p_{i+2k+1},p_{i+2k+2},\dots,p_{i+3k}$, where all the indices are modulo $n$, and thus $p_{i+3k}=p_i$. The graph $G_i$, that is obtained this way, has one cycle $(p_i,p_{i+k},p_{i+2k},p_i)$; see Figure~\ref{graph-fig}. By Lemma~\ref{prop1}, every closed halfplane, that is determined by the line through $c$ and a point of $P$, contains at least $k+1$ points of $P$. Thus, all points $p_i,p_{i+1},\dots,p_{i+k}$ lie in the closed halfplane to the left of the line through $c$ and $p_i$ that is oriented from $c$ to $p_i$. Similarly, the points $p_{i+k},\dots,p_{i+2k}$ lie in the closed halfplane to the left of the oriented line from $c$ to $p_{i+k}$, and the points $p_{i+2k},\dots,p_{i+3k}$ lie in the closed halfplane to the left of the oriented line from $c$ to $p_{i+2k}$. Thus, all the $k$ edges outgoing from $p_i$ are in the convex wedge bounded by the rays $\overrightarrow{cp_i}$ and $\overrightarrow{cp_{i+k}}$, all the edges
	outgoing from $p_{i+k}$ are in the convex wedge bounded by $\overrightarrow{cp_{i+k}}$ and $\overrightarrow{c_{i+2k}}$, and all the edges from $p_{i+2k}$ are in the convex wedge bounded by $\overrightarrow{cp_{i+2k}}$ and $\overrightarrow{c_{i+3k}}$. Therefore, the spanning directed graph $G_i$ is plane. As depicted in Figure~\ref{graph-fig}, by removing the edge $(p_{i+2k},p_i)$ from $G_i$ we obtain a plane spanning (directed) tree $T_i$. This is the end of our construction of $k$ plane spanning trees.
	
		\begin{figure}[htb]
			\centering
			\setlength{\tabcolsep}{0in}
			$\begin{tabular}{cc}
			\multicolumn{1}{m{.5\columnwidth}}{\centering\includegraphics[width=.41\columnwidth]{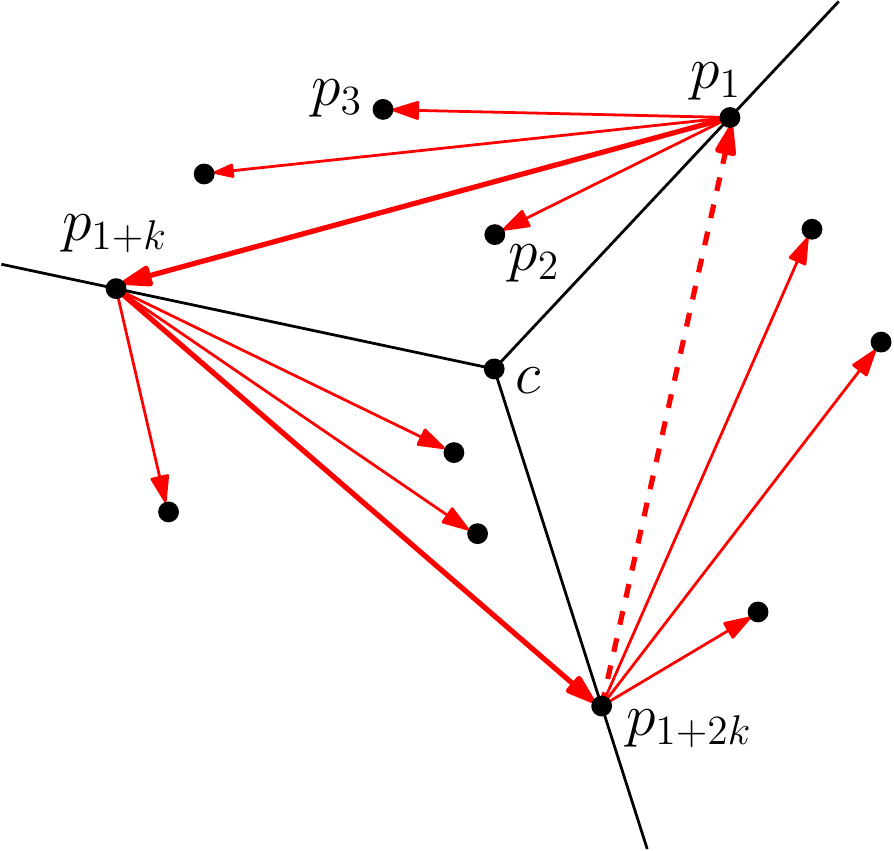}}
			&\multicolumn{1}{m{.5\columnwidth}}{\centering\includegraphics[width=.41\columnwidth]{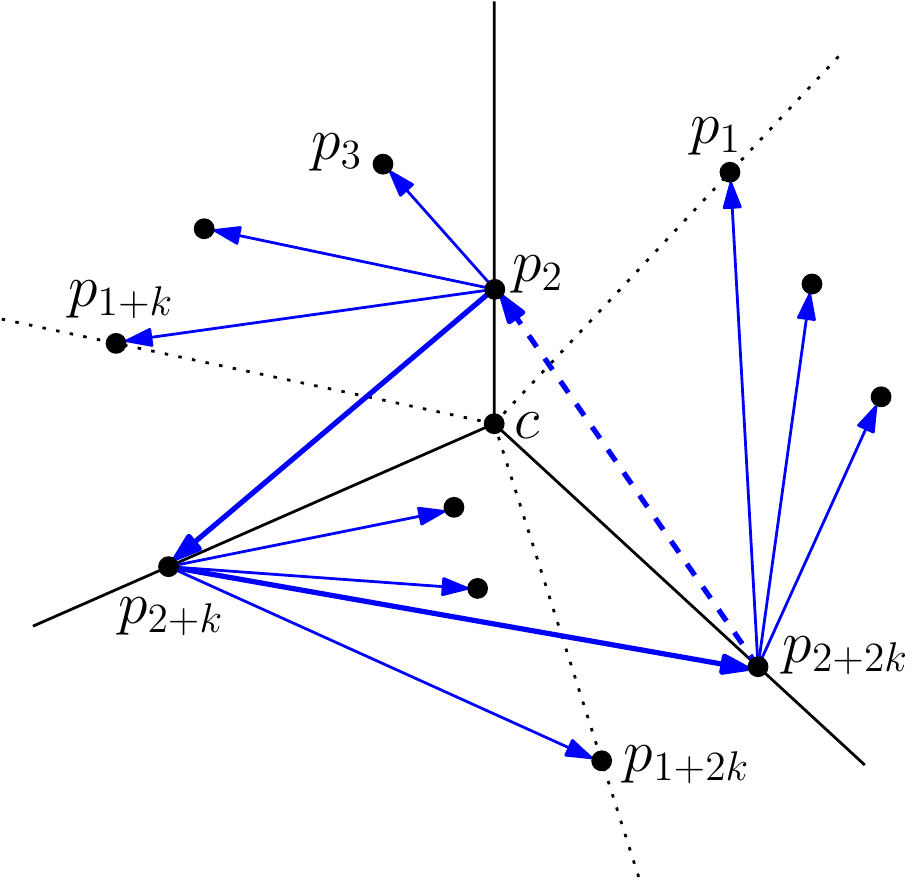}}
			\end{tabular}$
			\caption{The plane spanning trees $T_1$ (the left) and $T_2$ (the right) are obtained by removing the edges $(p_{1+2k},p_1)$ and $(p_{2+2k},p_2)$ from $G_1$ and $G_2$, respectively.}
			\label{graph-fig}
		\end{figure}

	To verify that the $k$ spanning trees obtained above are edge-disjoint, we show that two trees $T_i$ and $T_j$, with $i\neq j$, do not share any edge. Notice that the tail of every edge in $T_i$ belongs to the set $I=\{p_i,p_{i+k},p_{i+2k}\}$, and the tail of every edge in $T_j$ belongs to the set $J=\{p_j,p_{j+k},p_{j+2k}\}$, and $I\cap J=\emptyset$. For contrary, suppose that some edge $(p_r,p_s)$ belongs to both $T_i$ and $T_j$, and without loss of generality assume that in $T_i$ this edge is oriented from $p_r$ to $p_s$ while in $T_j$ it is oriented from $p_s$ to $p_r$. Then $p_r\in I$ and $p_s\in J$. Since $(p_r,p_s)\in T_i$ and the largest index of the head of every outgoing edge from $p_r$ is $r+k$, we have that $s\leqslant (r+k)\mod n$. Similarly, since $(p_s,p_r)\in T_j$ and the largest index of the head of every outgoing edge from $p_s$ is $s+k$, we have that $r\leqslant (s+k)\mod n$. However, these two inequalities cannot hold together; this contradicts our assumption that $(p_r,p_s)$ belongs to both trees. Thus, our claim, that $T_1,\dots, T_k$ are edge-disjoint, follows.
	This finishes our proof for the case where $n=3k$.
	
	If $n = 3k + 1$, then by Lemma~\ref{prop1}, every closed halfplane that is determined by the line through $c$ and a point of $P$ contains at least $k+2$ points of $P$. In this case, we construct $G_i$ by connecting $p_i$ to its following $k + 1$ points, i.e., $p_{i+1},\dots,p_{i+k+1}$, and then connecting each of $p_{i+k+1}$ and $p_{i+2k+1}$ to their following $k$ points. If $n = 3k + 2$, then we construct $G_i$ by connecting each of $p_i$ and $p_{i+k+1}$ to their following $k+1$ points, and then connecting $p_{i+2k+2}$ to its following $k$ points. This is the end of our proof for the case where $C$ is $2$-dimensional.
	
	Now we consider the case where $C$ is $0$-dimensional. By Corollary~\ref{dimension-cor}, $C$ consists of one point that belongs to $P$, and moreover $n=3k+1$ for some $k\geqslant 2$. Let $p\in P$ be the only point of $C$, and let $p_1,\dots,p_{n-1}$ be a counter-clockwise radial ordering of points in $P\setminus\{p\}$ around $p$. As in our first case (where $C$ was $2$-dimensional, $c$ was not in $P$, and $n$ was of the form $3k$) we construct $k$ edge-disjoint plane spanning trees $T_1,\dots, T_k$ on $P\setminus\{p\}$ where $p$ playing the role of $c$. Then, for every $i\in\{1,\dots,k\}$, by connecting $p$ to $p_i$, we obtain a plane spanning tree for $P$. These plane spanning trees are edge-disjoint. This is the end of our proof. In this section we have proved the following theorem.
	
	\begin{theorem}
		Every complete geometric graph, on a set of $n$ points in the plane in general position, contains at least $\lfloor n/3\rfloor$ edge-disjoint plane spanning trees.
	\end{theorem}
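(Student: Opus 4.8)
The plan is to produce the $\lfloor n/3\rfloor$ trees from one uniform ``fan-chain'' construction, letting Corollary~\ref{dimension-cor} split the argument into the two geometric regimes it permits. First I would dispose of the trivial cases $n\in\{3,4,5\}$, where any single plane spanning tree already meets the bound $\lfloor n/3\rfloor=1$, so that from now on $n\geqslant 6$ and Corollary~\ref{dimension-cor} applies: the center $C$ of $P$ is either $2$-dimensional or $0$-dimensional. Set $k=\lfloor n/3\rfloor$.

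In the $2$-dimensional case I would pick a centerpoint $c\in C$ with $c\notin P$ (available because $C$ has positive area) and sort $P=\{p_1,\dots,p_n\}$ counterclockwise by angle around $c$. For each $i\in\{1,\dots,k\}$ I build a directed graph $G_i$ with three \emph{hubs} $p_i,p_{i+k},p_{i+2k}$ (indices mod $n$): $p_i$ sends edges to the next block of points up to $p_{i+k}$, then $p_{i+k}$ to the next block up to $p_{i+2k}$, then $p_{i+2k}$ to the next block closing at $p_{i+3k}=p_i$. When $n=3k$ this yields a single $3$-cycle $(p_i,p_{i+k},p_{i+2k})$; deleting one cycle edge gives a plane spanning tree $T_i$. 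For $n=3k+1$ and $n=3k+2$ I would enlarge one, respectively two, of the three blocks by a single point so that $G_i$ still spans $P$.

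The two properties to verify are planarity and edge-disjointness. For planarity, each hub's outgoing edges lie inside a convex wedge at $c$ bounded by two of the rays $\overrightarrow{cp_i},\overrightarrow{cp_{i+k}},\overrightarrow{cp_{i+2k}}$; the three wedges have disjoint interiors and tile the neighbourhood of $c$, so no two edges of $G_i$ can cross. The point where Lemma~\ref{prop1} is indispensable is exactly here: its bound of $\lceil n/3\rceil+1$ (rather than $\lceil n/3\rceil$) points in each closed halfplane through $c$ and a point guarantees that the $(k+1)$-st point beyond a hub still falls inside the intended closed wedge, so each block is captured without spilling into the next. For edge-disjointness, the tails of the edges of $T_i$ lie in $I=\{p_i,p_{i+k},p_{i+2k}\}$ and those of $T_j$ in the disjoint set $J=\{p_j,p_{j+k},p_{j+2k}\}$; a common edge $\{p_r,p_s\}$ would be oriented oppositely in the two trees, forcing both $s\leqslant(r+k)\bmod n$ and $r\leqslant(s+k)\bmod n$, an impossibility.

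Finally, in the $0$-dimensional case Corollary~\ref{dimension-cor} tells us $C=\{p\}$ with $p\in P$ and $n=3k+1$. Here I would run the $n=3k$ construction on the $3k$ points of $P\setminus\{p\}$, using $p$ itself as the ``center'' for the radial sort, to obtain $k$ edge-disjoint plane spanning trees of $P\setminus\{p\}$; then I extend each $T_i$ by the single edge $pp_i$. Distinct indices $i$ give distinct attaching edges, so edge-disjointness survives, and each $pp_i$ lies along a ray from $p$ into its own wedge, so planarity survives. I expect the main obstacle to be the planarity bookkeeping in the $2$-dimensional case when $n\not\equiv 0\pmod 3$: one must check that the enlarged blocks still sit inside disjoint wedges, which is precisely where the strict ``$+1$'' in Lemma~\ref{prop1} does the decisive work.
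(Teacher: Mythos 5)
Your proposal is correct and follows essentially the same route as the paper: the same case split via Corollary~\ref{dimension-cor}, the same three-hub wedge construction around a centerpoint with planarity supplied by Lemma~\ref{prop1}, the same opposite-orientation argument for edge-disjointness, and the same attachment of the edge $pp_i$ to each tree in the $0$-dimensional case. The only cosmetic difference is that when $n\not\equiv 0 \pmod 3$ the enlarged blocks shift the later hubs to $p_{i+k+1}$ and $p_{i+2k+1}$ (respectively $p_{i+2k+2}$), exactly as in the paper's explicit construction.
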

	
	\section{The Dimension of the Center of a Point Set}
	\label{center-section}
		
		The {\em center} of a set $P$ of $n\geqslant d+1$ points in $\mathbb{R}^d$ is a subset $C$ of $\mathbb{R}^d$ such that any closed halfspace intersecting $C$ contains at least $\alpha=\lceil {n}/{(d+1)}\rceil$ points of $P$. Based on this definition, one can characterize $C$ as the intersection of all closed halfspaces such that their complementary open halfspaces contain less than $\alpha$ points of $P$. 
		More precisely (see \cite[Chapter 4]{Edelsbrunner1987}) $C$ is the intersection of a finite set of closed halfspaces $\overline{H_1}, \overline{H_2},\dots, \overline{H_m}$ such that for each $\overline{H_i}$ 
		\begin{enumerate}
			\item the boundary of $\overline{H_i}$ contains at least $d$ affinely independent points of $P$, and
			\item the complementary open halfspace $H_i$ contains at most $\alpha - 1$ points of $P$, and the
			closure of $H_i$ contains at least $\alpha$ points of $P$.
		\end{enumerate}
		
		Being the intersection of closed halfspaces, $C$ is a convex polyhedron. A {\em centerpoint} of $P$ is a member of $C$, which does not necessarily belong to $P$. 
		It follows, from the definition of the center, that any halfspace containing a centerpoint has	at least $\alpha$ points of $P$. 
		It is well known that every point set in the plane has a centerpoint \cite[Chapter 4]{Edelsbrunner1987}. In dimensions 2 and 3, a centerpoint can be computed in $O(n)$ time \cite{Jadhav1994} and in $O(n^2)$ expected time \cite{Chan2004}, respectively.

		A set of points in $\mathbb{R}^d$, with $d\geqslant 2$, is said to be in {\em general position} if no $k+2$ of them lie in a $k$-dimensional flat for every $k\in \{1,\dots,d-1\}$.\footnote{A flat is a subset of $d$-dimensional space that is congruent to a Euclidean space of lower dimension. The flats in 2-dimensional space are points and lines, which have dimensions 0 and 1.} Alternatively, for a set of points in $\mathbb{R}^d$ to be in general position, it suffices that no $d+1$ of them lie on the same hyperplane.		
		In this section we prove that if a point set $P$ in $\mathbb{R}^d$ is in general position, then the center of $P$ is of dimension either 0 or $d$. Our proof of this claim uses the following result of Gr\"{u}nbaum.
		
		\begin{theorem}[Gr\"{u}nbaum, 1962 \cite{Grunbaum1962}]
			\label{Grunbaum-thr}
			Let $\mathcal{F}$ be a finite family of convex polyhedra in $\mathbb{R}^d$, let $I$ be their intersection, and let $s$ be an integer in $\{1,\dots, d\}$. If every intersection of $s$ members of $\mathcal{F}$ is of dimension $d$, but $I$ is $(d-s)$-dimensional, then there exist $s + 1$ members of $\mathcal{F}$ such that their intersection is $(d - s)$-dimensional.
		\end{theorem}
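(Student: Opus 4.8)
The plan is to prove the theorem by \emph{localizing} it at a relative interior point of $I$ and then reducing it to a purely conical statement about how convex cones positively span a subspace. Fix a point $x\in\operatorname{relint}(I)$ and let $W=(\operatorname{aff} I)^{\perp}$ denote the orthogonal complement of the direction space of the affine hull of $I$; since $\dim I=d-s$, the space $W$ has dimension $s$, and in fact $W$ is exactly the normal cone $N_I(x)$ of $I$ at $x$. For each member $P_j\in\mathcal F$ write $C_j=N_{P_j}(x)$ for its normal cone at $x$. Because each member is a full-dimensional polyhedron, $C_j$ is a pointed closed convex cone, and because $x$ lies in every member, the polyhedral normal-cone calculus gives the sum rule $N_{\bigcap_{j\in G}P_j}(x)=\sum_{j\in G}C_j$ for every subfamily $G$. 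The first thing I would record is the dictionary this produces: for a convex polyhedron $Q$ containing $x$ one has $\dim Q=d$ iff $N_Q(x)$ is pointed, and $\dim Q=d-s$ iff $N_Q(x)$ contains the $s$-dimensional subspace $W$ (and then it equals $W$, since $Q\supseteq I$ forces $\dim Q\ge d-s$).

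Under this dictionary the hypotheses and the conclusion translate cleanly into statements about the cones $C_j\subseteq W\cong\mathbb R^s$. The assumption that $I$ is $(d-s)$-dimensional becomes $\sum_j C_j=W$; the assumption that every $s$-fold intersection of members is $d$-dimensional becomes that every sum of $s$ of the cones is pointed; and the desired conclusion becomes the existence of $s+1$ cones whose sum is all of $W$. Thus the theorem follows once I establish the following \emph{Cone Lemma}: if $C_1,\dots,C_m$ are pointed closed convex cones in $\mathbb R^s$ with $\sum_j C_j=\mathbb R^s$ and every $s$-fold sum $C_{j_1}+\dots+C_{j_s}$ pointed, then some $(s+1)$-fold sum equals $\mathbb R^s$. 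Passing to polar cones $D_j=C_j^{\circ}$ turns this into the equivalent Helly-type form --- $\bigcap_j D_j=\{0\}$ and every $s$-fold intersection of the $D_j$ full-dimensional imply that some $s+1$ of them already meet only at the origin --- which also makes visible that the bound $s+1$ is sharp (as witnessed by $s+1$ rays in general position).

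To prove the Cone Lemma I would pass to a subfamily whose cones sum to $\mathbb R^s$ and which is minimal for this property; the pointedness of $s$-fold sums immediately forces such a minimal subfamily to have at least $s+1$ members, so everything reduces to the reverse bound, that a minimal positively spanning subfamily has at most $s+1$ members. Minimality supplies, for each member $C_k$, a certificate functional $y_k\neq 0$ that is nonpositive on every other $C_j$ but strictly positive somewhere on $C_k$ (these are exactly the nonzero elements of $\bigcap_{j\ne k}D_j$). I would then feed these certificates, or the corresponding witness vectors $z_k\in C_k$, into a Radon-type partition argument: if the minimal subfamily had $s+2$ or more members, an affine dependence among the certificates produces a partition whose Radon point lies in every $D_j$, hence is the origin, which forces a strictly positive linear dependence among a subset of the certificates and thereby a sub-collection of the cones lying in a hyperplane. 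The aim is to convert this degeneracy, together with the $s$-fold pointedness hypothesis, into a sum of only $s$ of the cones that contains a line, contradicting pointedness of $s$-fold sums. Equivalently, one may run an induction on $s$, projecting along a generic direction to drop into $\mathbb R^{s-1}$.

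The routine parts are the normal-cone dictionary of the first paragraph (the sum rule and the pointedness equivalences are standard facts about polyhedral convex sets, though the constraint qualification should be stated carefully) and the easy direction $\ge s+1$ of the Cone Lemma. The main obstacle is the sharp upper bound in the Cone Lemma: a minimal positively spanning family of pointed cones can a priori have as many as $2s$ members (as the rays $\pm e_1,\dots,\pm e_s$ show), and the entire content of the $s$-fold pointedness hypothesis is precisely to exclude such ``antipodal'' configurations and pull the bound down to $s+1$. Making the Radon/positive-basis argument extract a genuinely forbidden non-pointed $s$-fold sum from any surplus member, rather than merely a lower-dimensional sub-sum, is the step I expect to require the most care.
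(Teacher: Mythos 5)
First, a point of calibration: the paper does not prove this statement at all --- it is quoted verbatim from Gr\"unbaum (1962) and used as a black box --- so your proposal cannot be measured against an in-paper argument and must stand on its own. Its first half does: the localization dictionary is correct. For polyhedral convex sets the sum rule $N_{\bigcap_{j\in G}P_j}(x)=\sum_{j\in G}N_{P_j}(x)$ indeed holds with no constraint qualification (polyhedrality replaces the relative-interior condition), the lineality space of $N_Q(x)$ is exactly $(\operatorname{aff}Q-x)^{\perp}$, which yields both ``$\dim Q=d$ iff $N_Q(x)$ pointed'' and the $W$-characterization of the $(d-s)$-dimensional case (using $Q\supseteq I$ for the lower bound on $\dim Q$), and with $x\in\operatorname{relint}(I)$ the hypotheses and conclusion translate exactly as you claim. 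So the reduction to your Cone Lemma, and the easy bound that a positively spanning family with $s$-fold sums pointed needs at least $s+1$ members, are sound.

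The genuine gap is the one you yourself flag, and it is not a minor finishing touch: the upper bound in the Cone Lemma \emph{is} Gr\"unbaum's theorem. Under polarity your lemma reads: full-dimensional closed convex cones $D_j$ in $\mathbb{R}^s$ with every $s$-fold intersection full-dimensional and $\bigcap_j D_j=\{0\}$ contain $s+1$ members already intersecting in $\{0\}$ --- which is precisely the statement being proved in the case $d=s$, $I$ a single point, for polyhedra that are cones with a common apex. So the reduction, while clean, removes none of the content. Moreover the Radon plan stalls exactly where you predict. The step that works: from a minimal positively spanning subfamily $S$ with $|S|\geqslant s+2$, minimality gives certificates $y_k\in\bigcap_{j\neq k}D_j\setminus\{0\}$, a linear dependence splits as $\sum_{k\in A}\lambda_k y_k=\sum_{k\in B}\mu_k y_k=w$ with $w\in\bigcap_{j\in S}D_j=\{0\}$ (each side avoids one of $A,B$), and one extracts a positive circuit $\sum_{k\in A}\lambda_k y_k=0$. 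But all this yields is that every $C_j$ with $j\in S\setminus A$ is annihilated by each $y_k$, $k\in A$, hence lies in the subspace $L=\{y_k:k\in A\}^{\perp}$ of dimension $s-|A|+1$ --- the ``lower-dimensional sub-sum'' you concede. No non-pointed $s$-fold sum follows from this without substantial further work: the cones indexed by $S\setminus A$ need not positively span $L$ (the $A$-cones also project nontrivially into $L$), so neither a direct contradiction nor a clean recursion inside $L$ is available as stated; and the alternative of projecting generically to $\mathbb{R}^{s-1}$ does not obviously preserve the $s$-fold pointedness hypothesis. As written, the proposal is a correct and rather elegant reformulation of the theorem at its apex case, together with an unexecuted plan for that case --- i.e., a proof of the theorem modulo the theorem.
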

		
		\begin{figure}[htb]
			\centering
			\setlength{\tabcolsep}{0in}
			\vspace{10pt}
			$\begin{tabular}{ccc}
			\multicolumn{1}{m{.33\columnwidth}}{\centering\includegraphics[width=.26\columnwidth]{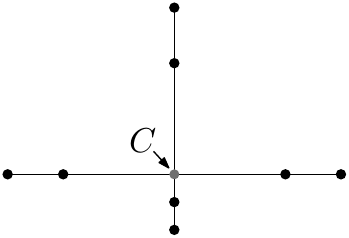}}
			&\multicolumn{1}{m{.33\columnwidth}}{\centering\includegraphics[width=.26\columnwidth]{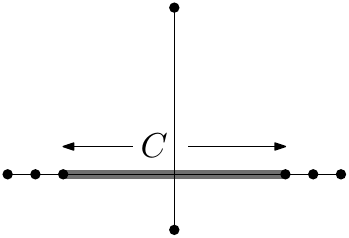}}
			&\multicolumn{1}{m{.33\columnwidth}}{\centering\includegraphics[width=.26\columnwidth]{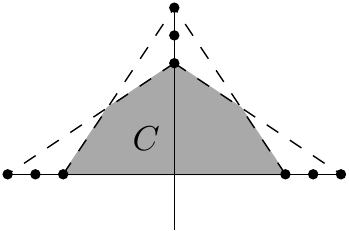}}
			\end{tabular}$
			\caption{The dimension of a point set in the plane, that is not in general position, can be any number in $\{0,1,2\}$.}
			\label{arbitrary-dim-fig}
			\vspace{5pt}
		\end{figure}
		
		Before proceeding to our proof, we note that if $P$ is not in general position, then the dimension of $C$ can be any number in $\{0,1,\dots,d\}$; see e.g. Figure~\ref{arbitrary-dim-fig} for the case where $d=2$.

\begin{observation}
	\label{dim-obs}
	For every $k\in\{1,\dots, d+1\}$ the dimension of a polyhedron defined by intersection of $k$ closed halfspaces in $\mathbb{R}^d$ is in the range $[d-k+1,d]$.
\end{observation}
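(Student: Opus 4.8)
The plan is to prove the claimed dimension bound by analyzing the two defining features of a polyhedron formed as an intersection of $k$ closed halfspaces: each halfspace can drop the dimension by at most one, and the upper bound is trivially $d$ since a polyhedron lives in $\mathbb{R}^d$.

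First I would establish the upper bound. Any subset of $\mathbb{R}^d$ has dimension at most $d$, so the intersection $\overline{H_1}\cap\cdots\cap\overline{H_k}$ is at most $d$-dimensional. This half is immediate and requires no structural argument.

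The substantive half is the lower bound $d-k+1$. I would argue by induction on $k$. For the base case $k=1$, a single closed halfspace is $d$-dimensional, matching $d-1+1=d$. For the inductive step, assume that the intersection $Q=\overline{H_1}\cap\cdots\cap\overline{H_{k-1}}$ has dimension at least $d-k+2$ (and note we may assume this intersection is nonempty and of the stated minimum dimension, since otherwise its affine hull already has dimension at least $d-k+2$). Intersecting $Q$ with one more closed halfspace $\overline{H_k}$ can lower the dimension by at most one: if the bounding hyperplane of $\overline{H_k}$ does not fully contain the affine hull of $Q$, then it meets that affine hull in a flat of exactly one lower dimension, and the portion of $Q$ on the closed side of the hyperplane still spans a flat of dimension $\dimension{Q}-1$; if the hyperplane does contain the affine hull of $Q$, or $Q$ lies entirely on the closed side, the dimension does not drop at all. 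In either case $\dimension{Q\cap\overline{H_k}}\geqslant \dimension{Q}-1\geqslant (d-k+2)-1=d-k+1$.

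I expect the main obstacle to be making the ``drops by at most one'' step fully rigorous, since it hinges on the behavior of the affine hull of a convex set under intersection with a halfspace. The care needed is to confirm that restricting a full-dimensional (within its own affine hull) convex set to one side of a generic hyperplane still leaves a set whose affine hull loses at most one dimension; this is a standard fact about convex bodies, but it is where a sloppy argument could silently assume general position of the halfspaces. Apart from that, the argument is elementary and the stated range $[d-k+1,d]$ is tight at both ends (achieved, respectively, by $k$ halfspaces whose bounding hyperplanes meet in a flat of the minimum possible dimension, and by nested or parallel halfspaces).
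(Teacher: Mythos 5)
Your upper bound is fine, but the inductive step for the lower bound is genuinely false: intersecting a polyhedron with one more closed halfspace can lower its dimension by \emph{more} than one. Concretely, take $d=2$, $Q=\{(x,y): x\geqslant 0\}\cap\{(x,y): y\geqslant 0\}$ (the closed first quadrant, dimension $2$), and $\overline{H_3}=\{(x,y): x+y\leqslant 0\}$. The bounding line $x+y=0$ does not contain the affine hull of $Q$, yet $Q\cap\overline{H_3}=\{(0,0)\}$ has dimension $0$, not $\dimension{Q}-1=1$. The part of $Q$ lying in the closed halfspace is a face of $Q$ and can be as small as a single vertex, so your claim that it ``still spans a flat of dimension $\dimension{Q}-1$'' is where the argument breaks. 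Note that this example is inside the observation's range ($k=3=d+1$), and your induction would wrongly conclude that the intersection has dimension at least $1$. The observation itself survives here (the guaranteed bound is $d-k+1=0$) because a drop by two can only occur when the earlier partial intersection had dimension strictly above its own guaranteed minimum; but that trade-off is exactly what a per-step ``drops by at most one'' induction cannot see, so the approach cannot be repaired locally.

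A correct argument works at a single point rather than one halfspace at a time. First, the statement implicitly requires the intersection $I$ to be nonempty (two disjoint parallel halfspaces give an empty intersection for $k=2$); this is harmless in the paper, where the halfspaces involved all contain the nonempty center $C$. Pick $p\in I$, write $\overline{H_i}=\{x:\langle n_i,x\rangle\leqslant b_i\}$, and let $S$ be the set of indices with $\langle n_i,p\rangle=b_i$, $s=|S|\leqslant k$; the remaining halfspaces contain a ball around $p$ and are irrelevant locally. If the normals $\{n_i : i\in S\}$ are linearly independent, there is a direction $v$ with $\langle n_i,v\rangle<0$ for all $i\in S$, so $p+\epsilon v$ is an interior point of $I$ for small $\epsilon>0$ and $\dimension{I}=d$. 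Otherwise these normals span a subspace of dimension $r\leqslant s-1\leqslant k-1$, and $I$ contains all points of the flat $\bigcap_{i\in S}\{x:\langle n_i,x\rangle=b_i\}$ that are sufficiently close to $p$; this flat has dimension $d-r\geqslant d-k+1$. Either way $\dimension{I}\geqslant d-k+1$. (Equivalently, one may quote the standard polyhedral fact that the dimension of a nonempty polyhedron $\{x:Ax\leqslant b\}$ in $\mathbb{R}^d$ equals $d$ minus the rank of its implicit-equality subsystem, together with the remark just proved that the normals of a nonempty implicit-equality set must be linearly dependent.)
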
		
		
	\begin{theorem}
			\label{alternate-dimension-thr}
			Let $P$ be a set of $n\geqslant d+1$ points in $\mathbb{R}^d$, and let $C$ be the center of $P$. Then, $C$ is either $d$-dimensional, or contained in a $(d-s)$-dimensional polyhedron that has at least $n-(s+1)(\alpha -1)$ points of $P$ for some $s\in\{1,\dots,d\}$ and $\alpha=\lceil n/(d+1)\rceil$. In the latter case if $P$ is in general position and $n\geqslant d+3$, then $C$ consists of one point that belongs to $P$, and $n$ is of the form $k(d+1)+1$ for some integer $k\geqslant 2$.
		\end{theorem}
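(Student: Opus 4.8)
The plan is to analyze the finite family of closed halfspaces $\overline{H_1},\dots,\overline{H_m}$ furnished by the characterization of $C$, using throughout that each open complement $H_i$ contains at most $\alpha-1$ points of $P$. The proof splits into a dimension dichotomy with a point count (the first assertion), and then a general-position refinement (the second assertion).

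First I would settle the dichotomy. If $C$ is $d$-dimensional we are in the first alternative, so suppose $\dim C=d-s$ with $s\ge 1$. The goal is to produce $s+1$ of the $\overline{H_i}$ whose intersection $Q$ is exactly $(d-s)$-dimensional; then $C\subseteq Q$ automatically. This is precisely what Grünbaum's theorem delivers, but to apply it one must check its hypothesis that every $s$-fold intersection is $d$-dimensional. I would arrange this by passing to an inclusion-minimal subfamily $\mathcal G\subseteq\{\overline{H_i}\}$ whose intersection is not $d$-dimensional: every proper subfamily of $\mathcal G$ is then $d$-dimensional, so with $s'=d-\dim(\bigcap\mathcal G)$ the hypothesis of Grünbaum's theorem holds for $\mathcal G$ and yields the desired $s'+1$ halfspaces. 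Writing $Q=\overline{H_{i_1}}\cap\dots\cap\overline{H_{i_{s+1}}}$, every point missing from $Q$ lies in some open complement $H_{i_j}$, so $|P\setminus Q|\le (s+1)(\alpha-1)$ and therefore $Q$ contains at least $n-(s+1)(\alpha-1)$ points of $P$. This is the second alternative.

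For the second part I would feed general position into this $Q$. Since $Q$ lies in its affine hull $A=\mathrm{aff}(Q)$, a flat of dimension $d-s$, general position gives $|P\cap Q|\le |P\cap A|\le (d-s)+1$. Combining this with the point count produces the single inequality $n\le (d-s+1)+(s+1)(\alpha-1)$. Using $\alpha-1\le (n-1)/(d+1)$ and clearing denominators, this rearranges to $(d-s)\,n\le (d+2)(d-s)$, because the right-hand side factors exactly as $(d+2)(d-s)$. For $1\le s\le d-1$ the factor $d-s$ is positive, so cancelling it yields $n\le d+2$, contradicting $n\ge d+3$; hence $s=d$. Then $Q$ is a single point, forcing $C=Q=\{c\}$, and the count reads $n\le (d+1)(\alpha-1)+|P\cap\{c\}|$. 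If $c\notin P$ this gives $n\le (d+1)(\alpha-1)\le n-1$, impossible, so $c\in P$; and then $n\le (d+1)(\alpha-1)+1$ with $\alpha=\lceil n/(d+1)\rceil$ is satisfiable only when $n\equiv 1\pmod{d+1}$, i.e.\ $n=k(d+1)+1$, while $n\ge d+3$ forces $k\ge 2$.

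The main obstacle is extracting \emph{exactly} $s+1$ halfspaces with $(d-s)$-dimensional intersection. With any larger number $t$ of constraints the union bound only gives $|P\setminus Q|\le t(\alpha-1)$, and the concluding arithmetic fails to close: it closes precisely because the multiplier $s+1$ matches the codimension, which is what makes the bound factor as $(d+2)(d-s)$. Guaranteeing the count $s+1$ is exactly Grünbaum's theorem, and its hypothesis—that every $s$-fold intersection be $d$-dimensional—is the delicate point; verifying it reduces to confirming that an inclusion-minimal dimension-dropping subfamily has precisely $s+1$ members, which is where I would spend the most care.
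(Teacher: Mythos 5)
Your proof is correct and shares the paper's skeleton --- Gr\"unbaum's theorem (Theorem~\ref{Grunbaum-thr}), the union bound giving $|P\cap Q|\geqslant n-(s+1)(\alpha-1)$, and the general-position bound $|P\cap Q|\leqslant d-s+1$ --- but your execution differs in two places, and in both it is cleaner. Where you pass to an inclusion-minimal dimension-dropping subfamily $\mathcal{G}$ and apply Gr\"unbaum once, the paper instead takes $s$ maximal so that every $s$-fold intersection of members of $\mathcal{H}$ is $d$-dimensional, extracts an $(s+1)$-member subfamily whose intersection has some deficient dimension $d-s'$, and needs a further application of Gr\"unbaum to force $s'=s$; the paper also runs a separate case (its case (a), again via Gr\"unbaum and contradiction) to establish the full-dimensional alternative, which your dichotomy on $\dim C$ gets for free. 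One caution on the step you yourself flag as delicate: for Gr\"unbaum's hypothesis you need every $s'$-subset of $\mathcal{G}$ to be a \emph{proper} subfamily, i.e.\ $|\mathcal{G}|\geqslant s'+1$, and this does not follow from minimality alone --- it is exactly the paper's Observation~\ref{dim-obs} (a nonempty intersection of $k$ closed halfspaces has dimension at least $d-k+1$). The correct order of reasoning is: Observation~\ref{dim-obs} gives $|\mathcal{G}|\geqslant s'+1$, then minimality gives Gr\"unbaum's hypothesis, and then Gr\"unbaum plus minimality yields that $\mathcal{G}$ has exactly $s'+1$ members; so you should cite or prove that observation rather than treat the ``exactly $s'+1$'' claim as the thing to verify. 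Your second deviation is the arithmetic: the factoring $n(d-s)\leqslant(d+2)(d-s)$, obtained from $\alpha-1\leqslant(n-1)/(d+1)$, forces $s=d$ in one stroke, whereas the paper splits into cases according to $n\bmod(d+1)$ and derives $s=d$ and $n\equiv 1$ simultaneously; your residue analysis is then confined to the final step, where it matches the paper's. (Both your proof and the paper's also quietly use that $C\neq\emptyset$, i.e.\ the centerpoint theorem, to conclude $C=\{c\}$.)
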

		
		\begin{proof}
			The center $C$ is a convex polyhedron that is the intersection of a finite family $\mathcal{H}$ of closed halfspaces such that each of their complementary	open halfspaces contains at most $\alpha-1$ points of $P$ \cite[Chapter 4]{Edelsbrunner1987}.
			Since $C$ is a convex polyhedron in $\mathbb{R}^d$, its dimension is in the range $[0,d]$.
			For the rest of the proof we consider the following two cases. 
			\begin{enumerate}
				\item[$($a$)$] The intersection of every $d+1$ members of $\mathcal{H}$ is of dimension $d$.
				\item[$($b$)$] The intersection of some $d+1$ members of $\mathcal{H}$ is of dimension less than $d$.
			\end{enumerate}
			
			First assume that we are in case (a). We prove that $C$ is $d$-dimensional. Our proof follows from Theorem~\ref{Grunbaum-thr} and a contrary argument. Assume that $C$ is not $d$-dimensional. Then, $C$ is $(d-s)$-dimensional for some $s\in\{1,\dots,d\}$. Since the intersection of every $s$ members of $\mathcal{H}$ is $d$-dimensional, by Theorem~\ref{Grunbaum-thr} there exist $s+1$ members of $\mathcal{H}$ whose intersection is $(d-s)$-dimensional. This contradicts the assumption of case (a) that the intersection of every $d+1$ members of $\mathcal{H}$ is $d$-dimensional. Therefore, $C$ is $d$-dimensional in this case.
			
			Now assume that we are in case (b). Let $s$ be the largest integer in $\{1,\dots, d\}$ such that every intersection of $s$ members of $\mathcal{H}$ is $d$-dimensional; notice that such an integer exists because every single halfspace in $\mathcal{H}$ is $d$-dimensional. Our choice of $s$ implies the existence of a subfamily $\mathcal{H}'$ of $s+1$ members of $\mathcal{H}$ whose intersection is $d'$-dimensional for some $d'<d$. Let $s'$ be an integer such that $d'=d-s'$. By Observation~\ref{dim-obs}, we have that $d'\geqslant d-s$, and equivalently $d-s'\geqslant d-s$; this implies $s'\leqslant s$. To this end we have a family $\mathcal{H}'$ with $s+1$ members for which every intersection of $s'$ members is $d$-dimensional (because $s'\leqslant s$ and $\mathcal{H}'\subseteq \mathcal{H}$), but the intersection of all members of $\mathcal{H'}$ is $(d-s')$-dimensional. Applying Theorem~\ref{Grunbaum-thr} on $\mathcal{H'}$ implies the existence of $s'+1$ members of $\mathcal{H'}$ whose intersection is $(d-s')$-dimensional. If $s'<s$, then this implies the existence of $s'+1\leqslant s$ members of $\mathcal{H}'\subseteq \mathcal{H}$, whose intersection is of dimension  $d-s'<d$. This contradicts the fact that the intersection of every $s$ members of $\mathcal{H}$ is $d$-dimensional. Thus, $s'=s$, and consequently, $d'=d-s'=d-s$. Therefore $C$ is contained in a $(d-s)$-dimensional polyhedron $I$ which is the intersection of the $s+1$ closed halfspaces of $\mathcal{H}'$.				
			Let $H_1,\ldots ,H_{s+1}$ be the complementary open halfspaces of members of $\mathcal{H}'$, and recall that each $H_i$ contains at most $\alpha-1$ points of $P$. Let $\overline{I}$ be the complement of $I$. Then,
			
			\begin{align}
			n&\notag =|I\cup \overline{I}|=|I\cup H_1\cup \dots\cup H_{s+1}|\\
			&\notag \leqslant |I|+|H_1|+\dots+|H_{s+1}| \leqslant |I|+ (s+1)(\alpha-1),
			\end{align}
			
			where we abuse the notations $I$, $\overline{I}$, and $H_i$ to refer to the subset of points of $P$ that they contain. This inequality implies that $I$ contains at least $n-(s+1)(\alpha-1)$ points of $P$. This finishes the proof of the theorem except for the part that $P$ is in general position.
			
			Now, assume that $P$ is in general position and $n\geqslant d+3$. By the definition of general position, the number of points of $P$ in a $(d-s)$-dimensional flat is not more than $d-s+1$. Since $I$ is $(d-s)$-dimensional, this implies that 
			 
			$$n-(s+1)(\alpha-1) \leqslant d-s+1.$$
			
			Notice that $n$ is of the form $k(d+1)+i$ for some integer $k\geqslant 1$ and some $i\in\{0,1,\dots, d\}$. Moreover, if $i$ is 0 or 1, then $k\geqslant 2$ because $n\geqslant d+3$. Now we consider two cases depending on whether or not $i$ is 0. If $i=0$, then $\alpha =k$. In this case, the above inequality simplifies to $k(d-s)\leqslant d-2s$, which is not possible because $k\geqslant 2$ and $d\geqslant s\geqslant 1$. If $i\in\{1,\dots,d\}$, then $\alpha=k+1$. In this case, the above inequality simplifies to $(k-1)(d-s)+i\leqslant 1$, which is not possible unless $d=s$ and $i=1$. Thus, for the above inequality to hold we should have $d=s$ and $i=1$. These two assertions imply that $n=k(d+1)+1$, and that $I$ is $0$-dimensional and consists of one point of $P$. Since $C\subseteq I$ and $C$ is not empty, $C$ also consists of one point of $P$.
\end{proof}

{\bf Acknowledgements.} Ahmad Biniaz was supported by NSERC Postdoctoral Fellowship. Alfredo Garc\'{\i}a was partially supported by MINECO project MTM2015-63791-R.

\vspace{5pt}

\begin{minipage}[l]{0.17\textwidth} \includegraphics[trim=10cm 6cm 10cm 5cm,clip,scale=0.15]{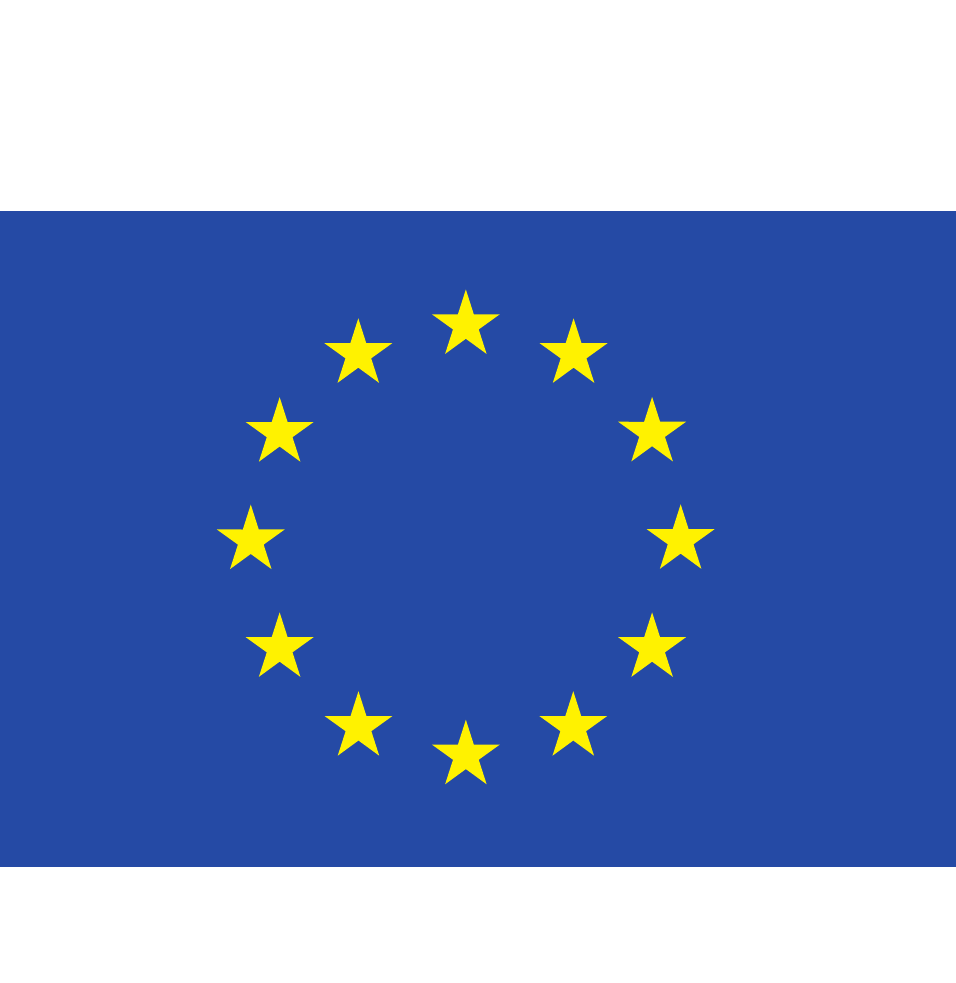} \end{minipage}  \hspace{-1cm} \begin{minipage}[l][1cm]{0.83\textwidth}
		Alfredo Garc\'{\i}a has received funding from the European Union's Horizon 2020 research and innovation programme under the Marie Sk\l{}odowska-Curie grant agreement No 734922.
\end{minipage}

	\bibliographystyle{abbrv}
	\bibliography{Packing-Spanning-Trees.bib}
\end{document}